\documentclass[conference,letterpaper]{IEEEtran}
\IEEEoverridecommandlockouts

\addtolength{\topmargin}{9mm}

\usepackage[utf8]{inputenc} 
\usepackage[T1]{fontenc}
\usepackage{url}
\usepackage{ifthen}
\usepackage{cite}
\usepackage[cmex10]{amsmath}

\usepackage{amsthm,amssymb,amsfonts}
\usepackage[letterpaper, top=0.75in, bottom=1.1in, left=0.64in, right=0.64in]{geometry}
\usepackage{makecell}
\usepackage [
    n, 
    advantage,
    operators,
    sets,
    adversary,
    landau,
    probability,
    notions,
    logic,
    ff, 
    mm,
    primitives,
    events,
    complexity,
    oracles,
    asymptotics,
    keys
]{cryptocode}

\newtheorem{definition}{Definition}
\newtheorem{theorem}{Theorem}
\newtheorem{lemma}{Lemma}

\interdisplaylinepenalty=2500 

\hyphenation{op-tical net-works semi-conduc-tor}

\begin{document}
\title{On the Context-Hiding Property of Shamir-Based Homomorphic Secret Sharing
}

\author{
Shuai Feng and Liang Feng Zhang
\thanks{Shuai Feng and Liang Feng Zhang are with the School of
 Information Science and Technology, ShanghaiTech University, Shanghai,
 China. Email: {fengshuai,zhanglf}@shanghaitech.edu.cn. This work was
 supported in part by the National Natural Science Foundation of China (No. 62372299).} 
             }
\maketitle

\begin{abstract}
Homomorphic secret sharing (HSS) allows multiple input clients to secretly share their private inputs to a function among several servers such that each server can homomorphically compute the function over its share to produce a share of the function's output. In HSS-enabled applications such as secure multi-party computation (MPC), security requires that the output shares leak no more information about the inputs than the function output. Such security is ensured by the context-hiding property of HSS. The typical rerandomization technique achieves context hiding but increases the share size. To address this, we formalize the context-hiding property of HSS for individual functions, examine the context-hiding property of Shamir-based HSS for monomials, and extend the study to polynomials. 
\end{abstract}

\section{Introduction}
An $n$-input $m$-server homomorphic secret sharing (HSS) scheme \cite{BGI16,ILM21} allows $n$ input clients to secretly share their private inputs $\mathbf x=(x_1,\ldots,x_n)$ among $m$ servers such that each server learns no information about $\bf x$ but can locally compute a function $f$ over its 
share (called \emph{input share}) of the inputs $\mathbf x$ to produce a share (called \emph{output share})
of the function's output $f({\bf x})$, and finally an output client can recover $f(\mathbf x)$ from all output shares. 
An HSS scheme is said to be \emph{information-theoretically (IT) $t$-private} \cite{ILM21,FIKW22} 
if the collusion of any $\le t$ servers learns no information about the inputs $\bf x$.
The communication efficiency of an HSS scheme may be measured by its \emph{upload rate}, which is the ratio between the size of $\bf x$ and that of all input shares, and the 
 {\em download rate}, which is the ratio between the size of $f({\bf x})$ and that of all output shares.
  The existing HSS schemes \cite{ILM21,FIKW22} may enable the evaluation of 
    multivariate polynomials and are based on the  
    well-known secret sharing schemes such as CNF \cite{ISN89} and Shamir \cite{Sha79}. 
In particular, the  Shamir-based HSS is significantly more efficient in
  communication. 

While the existing HSS achieves strong IT security of leaking
 absolutely no information about $\bf x$ to the  servers, 
 it remains a question whether an output client can learn more information about $\mathbf x$ than what is implied by the function output $ f({\bf x})$. 
Ideally, we would like a negative answer to this question, i.e., the output client should
learn  nothing about $\mathbf x$ other than $f({\bf x})$.  
This requirement  is well-motivated in  HSS-based applications such as secure multi-party computation (MPC) \cite{YO19,ILM21} where
 the security may require   a proper subset of the input clients and  $\leq t$ servers learn  no more information about the inputs of the other clients than what follows from their own inputs and  
  outputs.

The nice property of keeping an output client innocent about the inputs has been called
\emph{context hiding} and introduced in the context of homomorphic signatures \cite{SBB19, ALP13} and later extended to HSS \cite{LMS18,CF15}. 
Most existing HSS schemes \cite{CF15,CZL23,LMS18,XW23,PSAM20,EN21} achieve the context-hiding property by “rerandomizing” the output shares in an underlying scheme (i.e., masking the  output shares with shares of 0),
regardless of
 whether the underlying scheme already achieves the context-hiding property for {\em individual} functions or not. 
They require the input clients to  send  shares of 0 to   servers and 
thus resulting in degraded upload rates. 
In particular, if the underlying IT-HSS is Shamir-based, by adding the rerandomization steps {\em the upload rate will be halved}.

The state of the art shows that it is important to conduct a fine-grained
study on the context-hiding property of HSS schemes for   individual
functions, 
because  the study may uncover the fact that the schemes already 
achieve the context-hiding property for specific functions without rerandomization   and 
 thus significantly 
improve the upload rate by removing the  
unnecessary steps. 
Fosli et al. \cite{FIKW22} studied the {\em symmetric privacy}, which is the same notion as context hiding, in HSS schemes,
and examined the property for the monomials $x_1 x_2$ and  $x_1 x_2 x_3$. 
In this paper, we focus on the context-hiding property of Shamir-based IT-HSS for more polynomial functions, and make the following contributions.

\begin{table}[!t]
    \centering
    \caption{Context-hiding property of Shamir-based HSS for Monomials}
    \label{tab:allthms}
    \begin{tabular}{c|c|c|c|c}
    \hline
         Monomial & Domain & \# Variates  & Degree & Context-Hiding \\
       \hline
           $1$ & non-empty set  & $0$  &  $0$  & \checkmark \\
         $x$ & $\mathbb F_p$ & $1$  & $1$  & 
 \checkmark \\
          $x_1 \cdots x_d$ & $\mathbb F_p^d$ & $d $  &  $d\ge 2 $  &  $\times$ \\
          $x^d$ & $\mathbb F_p$ & $1$  & $d \ge 1$ & 
 \checkmark \\
           $x_1^{d_1} \cdots x_n^{d_n}$ & $(\mathbb F_p^*)^n$ & $n \ge 2$  & $d_i\ge 1$ & \checkmark \\
         \hline   
    \end{tabular}
\end{table}

\noindent \textbf{Definition of context-hiding property.} 
We formally define the context-hiding property of HSS for each individual function $f$.
Specifically, we require that the output client should be unable to distinguish between any two different inputs $\mathbf x^{(0)},\mathbf x^{(1)}$ that result in the same function output, i.e., $f(\mathbf x^{(0)}) = f(\mathbf x^{(1)})$.

\noindent \textbf{Context-hiding property of Shamir-based HSS.} 
We study the context-hiding property of the Shamir-based HSS  \cite{FIKW22} for monomials and summarize the results
 in \textsc{TABLE} \ref{tab:allthms}. 
We define the equivalence of polynomials and extend the results for monomials to the equivalent 
polynomials.

\section{Preliminaries}
\label{sec_preliminaries}
Let $\lambda\in\mathbb N$ be a security parameter.
We say that a function $f(\lambda)$ is
   {\em negligible}  in $\lambda$ and denote $f(\lambda)={\sf negl}(\lambda)$, if
   $f(\lambda)=o(\lambda^{-c})$ for any constant $c>0$.
We denote the degree of a polynomial function $f$ by $\mathsf{deg}(f)$, 
the domain of $f$ by ${\cal D}_f$, 
and the range of $f$ by ${\cal R}_f$. 
For any integer $n>0$ and $a<n$, we denote $[n] = \{1, \ldots,n\}$ and $[a,n] = \{a, \ldots,n\}$.
For any prime  $p$, we denote by $\mathbb F_p$ the finite field of order $p$ and denote by $\mathbb{GL}_n(\mathbb F_p)$ the degree-$n$ general linear group over $\mathbb F_p$.
For any finite set $S$, we denote by ``$s\gets S$'' the process of
sampling  an element $s$ uniformly  from $S$.
For any algorithm $\sf Alg$, we denote by ``$y\leftarrow {\sf Alg}(x)$'' the process of running $\sf Alg$ on an input $x$ and assigning its output to $y$.
For any vector $\mathbf x$ of length $n$ and any $i\in[n]$, we denote by $x_i$ the $i$-th entry of $\mathbf x$.

\subsection{Information-Theoretic Homomorphic Secret Sharing}
 
\begin{definition}[Information-Theoretic Homomorphic Secret Sharing (IT-HSS) \cite{ILM21}]
An \emph{$n$-input $m$-server information-theoretic homomorphic secret sharing} scheme $\mathsf{IT\text{-}HSS}=(\mathsf{Share},\mathsf{Eval},\mathsf{Dec})$
for $n$-variate degree-$d$ polynomials over $\mathbb F_p$ consists of the following algorithms:
    \begin{itemize}
        \item $\mathsf{Share}(\mathbf x)\to (\mathbf{s}_j)_{j=1}^m$: Given an input vector $\mathbf{x}=(x_1,\ldots,x_n)$,
        the \emph{sharing} algorithm computes  a tuple 
        of $m$ input shares $(\mathbf{s}_j)_{j=1}^m$.
        \item $\mathsf{Eval}(j,f,\mathbf{s}_j)\to y_j$:
        Given an index $j$, a function $f\in {\cal F}$ and an input share $\mathbf{s}_j$,
        the \emph{evaluation} algorithm executed by a server ${\cal S}_j$ computes  an output share $y_j$.
        \item $\mathsf{Dec}((y_j)_{j=1}^m)\to y$:
        Given the output shares
        $(y_j)_{j=1}^m$, the \emph{decoding} algorithm outputs
         the evaluation result $y$.
    \end{itemize}
\end{definition}

An IT-HSS scheme should satisfy the properties of correctness and privacy. An IT-HSS scheme is \emph{correct} if $\sf Dec$ outputs the correct $f({\bf x})$ when all algorithms are correctly executed.

\begin{definition}[Correctness]
    An IT-HSS scheme is {\emph{correct}} if {for any polynomial $f\in \mathbb F_p[\mathbf x]$ of degree at most $d$,}
    any  $\mathbf{x}\in {\cal D}_f$,
    any $(\mathbf{s}_j)_{j=1}^m\gets \mathsf{Share}(\mathbf{x})$, and any $\{y_j\gets \mathsf{Eval}(j,f,\mathbf{s}_j)\}_{j=1}^m$, 
    it holds that $\mathsf{Dec}((y_j)_{j=1}^m)=f(\mathbf x)$.
\end{definition}

An IT-HSS scheme is {\em $t$-private} if  
 no information about the inputs is disclosed to any $t$ servers.

\begin{definition}[Privacy] 
    An IT-HSS scheme is $t$-\emph{private} if for any   set $T\subseteq [m]$ of cardinality $\le t$ and any two different inputs $\mathbf x^{(0)}$ and $\mathbf x^{(1)}$, $\mathsf{Share}_T(\mathbf x^{(0)})$ and $\mathsf{Share}_T(\mathbf x^{(1)})$ are identically distributed, where the subscript $T$ means restricting to the entries of the output of $\sf Share$ labeled by the indices in $T$. 
\end{definition}

\subsection{Shamir-Based IT-HSS Scheme}

Several recent works \cite{CLZ24,FIKW22,ILM21,ZW22} have used Shamir's threshold scheme \cite{Sha79} 
to build $n$-input $m$-server $t$-private IT-HSS schemes, where every input share consists of
 $n$ field elements and every output share consists of a single field element. 
The basic idea is as follows:  each input client shares an input $x_i~(i\in[n])$ among $m$ servers by executing the share generation procedure of Shamir's $(t,m)$-threshold scheme;
to  evaluate  a degree-$d$ (where $d < m /t $) polynomial $f$ on  $\mathbf x = (x_1,\ldots, x_n)$, each server $\mathcal S_j$ locally computes $f$ on its input share $\mathbf s_j=(s_{j,1},\ldots,s_{j,n})$ to generate
an   output share   $f(\mathbf s_j)$; finally, the output client  
invokes Lagrange interpolation on the output   shares  to recover 
$f(\mathbf x)$.

\begin{definition}[Shamir-Based IT-HSS  \cite{ILM21,FIKW22}]\label{def:ithss}
    An \emph{$n$-input $m$-server $t$-private Shamir-based IT-HSS} scheme $\Pi=(\mathsf{Share},\mathsf{Eval},\mathsf{Dec})$
for $n$-variate degree-$d$ (where $d < m / t$)  polynomials over $\mathbb F_p$ consists of the following algorithms:
    \begin{itemize}
        \item $\mathsf{Share}(\mathbf x)\to (\mathbf{s}_j)_{j=1}^m$: Choose a random polynomial $\varphi\in\mathbb F_p^n[u]$ of degree $\le t$ such that $\varphi(0)=\mathbf x$; compute the $j$-th server's input share $\mathbf s_j=\varphi(j)$ for every $j\in[m]$.
        \item $\mathsf{Eval}(j,f,\mathbf{s}_j)\to y_j$:
        Server $\mathcal S_j$ computes an output share $y_j=f(\mathbf s_j)$.
        \item $\mathsf{Dec}((y_j)_{j=1}^m)\to y$:
        Since $f\circ \varphi$ is a polynomial of degree $\le dt$, invoke Lagrange interpolation to recover 
         the evaluation result as
$$
y = \sum_{j=1}^m y_j \prod_{k\in[m],k\ne j}\frac{k}{k-j}.
$$
    \end{itemize}
\end{definition}

Correctness of Shamir-based IT-HSS has been proven in \cite{ILM21} and  $t$-privacy follows from Shamir's scheme \cite{Sha79}.

\section{Defining Context-Hiding Property of IT-HSS}
In IT-HSS, the output client may be different from the input clients and try to learn information about the inputs $\bf x$ from the received output shares. In some applications such as MPC, the input clients are reluctant to leak $\bf x$ to the output client. To prevent the output client from learning $\bf x$, the context-hiding property has been considered in \cite{LMS18,CF15} and requires that the output client cannot distinguish between the output shares generated by the servers and the simulated ones. 

In \cite{FIKW22}, an IT-HSS scheme is said to be symmetrically private (i.e., context-hiding) if {\em for any function} $f$ from a function family ${\cal F}$ under consideration, the joint distribution of all output shares depends only on $f(\mathbf x)$. 
Specifically, it requires that for any two different inputs ${\bf x}^{(0)},{\bf x}^{(1)}$ that result in the same function output $y$,  the corresponding output shares 
must be {\em identically distributed}. 
Such a definition is somewhat one-sided. 
On the one hand, when the corresponding output shares are {\em statistically indistinguishable}, it is also difficult for an adversary to distinguish between the two inputs. 
On the other hand, if the scheme fails to meet the context-hiding requirement for just {\em a single function}
in ${\cal F}$, the scheme fails to be context-hiding.

To study context-hiding property in a more fine-grained way
we 
 introduce a {\em context hiding experiment} $\mathbf{Exp}_{\mathcal A,\mathsf{IT}\text{-}\mathsf{HSS}}^{\mathsf{Ctx\text{-}Hid}}(f)$ for any individual  function $f$: 
\begin{itemize}
    \item $(\mathbf{x}^{(0)},\mathbf{x}^{(1)})\gets \mathcal A$, s.t., $\mathbf{x}^{(0)}\neq \mathbf{x}^{(1)}$, $f(\mathbf{x}^{(0)})=f(\mathbf{x}^{(1)})$;

    \item
           $ b\gets \{0,1\},
             (\mathbf{s}_j)_{j=1}^m \gets \mathsf{Share}(\mathbf{x}^{(b)})$; 

    \item
       $y_j\gets{\sf Eval}(j, f,\mathbf s_j)$ for all $j\in[m]$;
        
    \item
            $b'\gets \mathcal A((y_j)_{j\in [m]})$;

    \item
            if $b'=b$, output $1$;
            otherwise, output $0$.
\end{itemize}
In the experiment, an adversary $\mathcal A$ declares two different input vectors ${\mathbf x}^{(0)},{\mathbf x}^{(1)}\in {\cal D}_f$ such that $f({\mathbf x}^{(0)})=f({\mathbf x}^{(1)})$. 
The challenger chooses an input vector ${\mathbf x}^{(b)}$ uniformly and runs $\sf Share$ and $\sf Eval$ to generate $m$ output shares. Finally, $\cal A$ observes all output shares and guesses which input vector was used by the challenger.

\begin{definition}[Context Hiding] \label{def_ch}
     Let $\mathcal A$ be an adversary.  
     For any $\lambda\in\mathbb N$, in an IT-HSS scheme $\mathsf{IT\text{-}HSS}$ for a function $f$ over
      a finite field $\mathbb F_p$ of order $p= O(2^\lambda)$,  define
     $\mathcal A$'s {\em advantage} in the context hiding experiment $\mathbf{Exp}_{\mathcal A,\mathsf{IT}\text{-}\mathsf{HSS}}^{\mathsf{Ctx\text{-}Hid}}(f)$   as
    $
    \mathbf{Adv}_{\mathcal A,\mathsf{IT}\text{-}\mathsf{HSS}}^{\sf Ctx\text{-}Hid}(f)= \big|\Pr[\mathbf{Exp}_{\mathcal A,\mathsf{IT}\text{-}\mathsf{HSS}}^{\sf Ctx\text{-}Hid}(f)=1]-1/2\big|
    $,
    where the probability is taken over all randomness in the experiment. 
The scheme $\sf IT$-$\sf HSS$ is called \emph{context-hiding for $f$} if for any adversary $\mathcal A$,  
    $
    \mathbf{Adv}_{\mathcal A,\mathsf{IT}\text{-}\mathsf{HSS}}^{\sf Ctx\text{-}Hid}(f)
    \le\mathsf{negl}(\lambda)
    $.
    The scheme
is {\em perfectly context-hiding for $f$} if for any adversary $\mathcal A$, $\mathbf{Adv}_{\mathcal A,\mathsf{IT}\text{-}\mathsf{HSS}}^{\mathsf{Ctx}\text{-}\mathsf{Hid}}(f)=0$. 
\end{definition}

\noindent
{\em Remark 1.} When studying the context-hiding property for a function $f$, there is no need to 
consider a value $y\in {\cal R}_f$ that has a unique preimage ${\bf x}\in {\cal D}_f$, because  
the output shares simply cannot leak more information than what is implied by $y$, which is known to the output 
client. In particular, a bijective function is always perfectly context-hiding.

\section{Context-Hiding Property of Shamir-Based IT-HSS for Monomials} \label{sec_mon}
In this section, we investigate the context-hiding property of the Shamir-based IT-HSS scheme for monomials. We start from the multilinear monomial $x_1 x_2\cdots x_d$, and then study the general monomial $x_1 ^{d_1} x_2^{d_2}\cdots x_n^{d_n}$. In particular, to compute a degree-$d$ polynomial, the Shamir-based IT-HSS scheme with $t$-privacy requires at least $dt + 1$ servers, i.e., $m\ge dt + 1$.

\subsection{For Multilinear Monomial $x_1 x_2\cdots x_d$}

When the degree of monomial $f$ is $d=0$, $f=1$ is a constant function, then there is no input to the function, so the output shares are independent of the client's input; when the degree of monomial $f$ is $d=1$, the value of $x$ is the information implied in $f(x)$. Thus, we conclude that the Shamir-based IT-HSS scheme is perfectly context-hiding for $f_0= 1$ and $f_1(x) = x$ with ${\cal D}_{f_0}$ being an arbitrary non-empty set and ${\cal D}_{f_1}=\mathbb F_p$. 
\begin{theorem}
    The $m$-server $t$-private Shamir-based IT-HSS scheme $\Pi$ is perfect context-hiding for $f_0 = 1$ and $f_1(x) = x$ with ${\cal D}_{f_0}$ being an arbitrary non-empty set and ${\cal D}_{f_1}=\mathbb F_p$.
\end{theorem}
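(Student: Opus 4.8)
The plan is to treat the two functions separately, since each is context-hiding for a distinct and elementary reason. For the constant $f_0=1$, context hiding holds because the output shares carry no information whatsoever about the input; for the identity $f_1(x)=x$, it holds because the experiment admits no legal challenge in the first place.

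First I would dispose of $f_0=1$. The decisive observation is that evaluation does not depend on the share: for every server index $j\in[m]$ and every input share $\mathbf s_j$, the output share is $y_j=\mathsf{Eval}(j,f_0,\mathbf s_j)=f_0(\mathbf s_j)=1$. Hence, whichever challenge input $\mathbf x^{(b)}$ is shared and whatever randomness $\mathsf{Share}$ uses, the transcript seen by $\mathcal A$ is deterministically the all-ones tuple $(1,\ldots,1)$. Because this transcript is the same under $b=0$ and $b=1$, the adversary's guess $b'$ is statistically independent of $b$; as $b$ is uniform over $\{0,1\}$, this forces $\Pr[b'=b]=1/2$ for every (possibly randomized) strategy, so $\mathbf{Adv}_{\mathcal A,\mathsf{IT}\text{-}\mathsf{HSS}}^{\sf Ctx\text{-}Hid}(f_0)=0$.

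Next I would treat $f_1(x)=x$, which is a bijection on $\mathbb F_p$. No two distinct inputs $\mathbf x^{(0)}\neq \mathbf x^{(1)}$ can satisfy $f_1(\mathbf x^{(0)})=f_1(\mathbf x^{(1)})$, so $\mathcal A$ can never output a valid challenge pair in the first step of $\mathbf{Exp}_{\mathcal A,\mathsf{IT}\text{-}\mathsf{HSS}}^{\mathsf{Ctx\text{-}Hid}}(f_1)$. Context hiding therefore holds vacuously, exactly the situation already recorded in Remark 1, and I would simply invoke it to conclude $\mathbf{Adv}_{\mathcal A,\mathsf{IT}\text{-}\mathsf{HSS}}^{\sf Ctx\text{-}Hid}(f_1)=0$.

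There is little genuine difficulty here; the only step needing care is the argument for $f_0$, where the informal phrase ``independent of the input'' must be converted into the precise probabilistic independence of $b'$ and $b$ rather than left as intuition. A minor subtlety worth flagging is that, unlike the $f_1$ case, the challenge phase for $f_0$ need not be vacuous --- when the domain has at least two elements the adversary may legally submit distinct inputs, both mapping to $1$ --- yet context hiding still follows because the output shares collapse to a fixed constant that erases every trace of the chosen input.
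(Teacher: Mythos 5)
Your proposal is correct and takes essentially the same approach as the paper, which justifies the theorem in the preceding paragraph by observing that for $f_0=1$ the output shares are independent of the client's input and that for $f_1(x)=x$ the input is already implied by $f_1(x)$ (the bijection case covered by \emph{Remark 1}). Your write-up merely makes the paper's informal reasoning precise --- the deterministic all-ones transcript for $f_0$ and the vacuity of the challenge phase for $f_1$ --- which is a tightening, not a different argument.
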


Next we study the context-hiding property of the Shamir-based IT-HSS scheme for $f(\mathbf x) = \prod_{i=1}^d x_i$ where $d\ge 2$. 

\begin{theorem} \label{thm_m22}
   The $m$-server $t$-private Shamir-based IT-HSS scheme $\Pi$ is not context-hiding for $f(\mathbf x) = \prod_{i=1}^d x_i$ with $d\ge 2$ and ${\cal D}_f=\mathbb F_p^d$.
\end{theorem}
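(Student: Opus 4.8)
The plan is to refute context hiding by exhibiting an explicit adversary $\mathcal A$ whose advantage is bounded away from $0$. The crucial structural observation is that, by the definition of $\Pi$, the output shares are $y_j = f(\mathbf s_j) = \prod_{i=1}^d \varphi_i(j)$, i.e.\ they are the evaluations at $u=1,\ldots,m$ of the single univariate polynomial
$$g(u) := \prod_{i=1}^d \varphi_i(u),$$
which has degree at most $dt$. Since $\Pi$ requires $m\ge dt+1$ distinct nonzero evaluation points (so in particular $p>m$), we have $\deg g \le dt < m$, and the $m$ output shares over-determine $g$; hence $\mathcal A$ can reconstruct $g$ exactly by Lagrange interpolation. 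Thus $\mathcal A$ learns not merely $g(0)=f(\mathbf x)$ but the whole product polynomial, which encodes far more about the factors $\varphi_i$, and hence about $\mathbf x$, than the function value alone.

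I would exploit this by letting $\mathcal A$ submit the inputs $\mathbf x^{(0)}=(0,\ldots,0)$ and $\mathbf x^{(1)}=(0,\ldots,0,c)$ for an arbitrary fixed $c\in\mathbb F_p^*$; these are distinct, lie in ${\cal D}_f=\mathbb F_p^d$, and satisfy $f(\mathbf x^{(0)})=0=f(\mathbf x^{(1)})$. After reconstructing $g$, $\mathcal A$ reads off the coefficient $g_{d-1}$ of $u^{d-1}$ and outputs $b'=0$ if $g_{d-1}=0$ and $b'=1$ otherwise. The idea is that the two inputs differ in their number of zero coordinates, and this controls the order of vanishing of $g$ at $u=0$: under $b=0$ every $\varphi_i(0)=0$, so $u^d\mid g$ and hence $g_{d-1}=0$ with certainty; under $b=1$ only $\varphi_1,\ldots,\varphi_{d-1}$ vanish at $0$, so writing $\varphi_i=u\psi_i$ for $i<d$ we get $g(u)=u^{d-1}\bigl(\prod_{i=1}^{d-1}\psi_i(u)\bigr)\varphi_d(u)$, and $g_{d-1}$ equals the constant term $c\prod_{i=1}^{d-1}\psi_i(0)=c\prod_{i=1}^{d-1}a_{i,1}$, where $a_{i,1}$ is the linear coefficient of $\varphi_i$.

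The final step is the advantage computation. Since the $a_{i,1}$ are independent and uniform over $\mathbb F_p$ and $c\neq 0$, the coefficient $g_{d-1}$ is nonzero exactly when all $a_{i,1}\neq 0$, which occurs with probability $(1-1/p)^{d-1}$; consequently $\Pr[b'=b]=\tfrac12\cdot 1+\tfrac12(1-1/p)^{d-1}$ and the advantage is $\tfrac12(1-1/p)^{d-1}$. For any fixed $d\ge 2$ and $p=O(2^\lambda)$ this quantity tends to $1/2$ and is certainly non-negligible, so $\Pi$ is not context-hiding for $f$. I expect the only delicate points to be (i) justifying the exact reconstruction of $g$, which rests on $\deg g\le dt<m$ together with the standing assumption $p>m$ inherited from Shamir's scheme, and (ii) verifying the independence and uniformity of the coefficients $a_{i,1}$ so that the probability $(1-1/p)^{d-1}$ is exact; both become routine once the reduction to the single polynomial $g$ is in place.
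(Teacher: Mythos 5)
Your proposal is correct and takes essentially the same approach as the paper: both arguments distinguish the inputs $(0,\ldots,0)$ and $(0,\ldots,0,c)$ (the paper takes $c=1$) by extracting the coefficient of $u^{d-1}$ in the underlying degree-$\le dt$ product polynomial, which is zero under $b=0$ and equals $c\prod_{i=1}^{d-1} r_{1,i}$ under $b=1$, giving the identical advantage $\tfrac12(1-1/p)^{d-1}=\tfrac{(p-1)^{d-1}}{2p^{d-1}}$. The only cosmetic difference is that you recover the full polynomial by Lagrange interpolation from all $m\ge dt+1$ shares and read off the coefficient, whereas the paper computes the same linear functional directly by inverting a $(dt-d+2)\times(dt-d+2)$ Vandermonde-type matrix on the first $dt-d+2$ shares.
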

\begin{proof}
The output share from server $\mathcal S_j$ for each $j\in [m]$ is 
$$
    y_j
    =  \prod_{i=1}^d \bigg(x_i + \sum_{u=1}^t j^u r_{u,i}\bigg), 
$$
where $r_{u,i}$ is chosen uniformly from $\mathbb F_p$ for every $u\in [t]$ and $i\in[d]$. 
The adversary $\mathcal A$ chooses two inputs
$$\mathbf x^{(0)} = (0,\ldots,0)\in \mathbb F_p^d, 
\mathbf x^{(1)} = (0,\ldots,0,1)\in \mathbb F_p^d
$$
such that $f(\mathbf x^{(0)}) = f(\mathbf x^{(1)}) = 0$.

If the challenger chooses $b=0$, then the output share from $\mathcal S_j$ for each $j\in [m]$ is 
\begin{align*}
   y^{(0)}_j
   =   \prod_{i=1}^d \bigg(\sum_{u=1}^t j^u r_{u,i}^{(0)} \bigg) 
    =   \sum_{v=d}^{dt} j ^ v 
        \bigg( \sum_{\substack{u_1,\ldots,u_d\in [t]\\u_1 + \cdots + u_d = v}} \prod_{i=1}^d r_{u_i,i}^{(0)}\bigg).
    \end{align*}
If the challenger chooses $b=1$, then the output share from $\mathcal S_j$ for each $j\in [m]$ is 
\begin{align*}
    y^{(1)}_j
    = & \bigg(\prod_{i=1}^{d-1} \bigg(\sum_{u=1}^t j^u r_{u,i} ^{(1)} \bigg)\bigg)\cdot \bigg(1 + \sum_{u=1}^t j^u r_{u,d} ^{(1)} \bigg) \\
     = & \prod_{i=1}^{d-1} \bigg(\sum_{u=1}^t j^u r_{u,i} ^{(1)} \bigg) + \prod_{i=1}^d \bigg(\sum_{u=1}^t j^u r_{u,i} ^{(1)} \bigg) \\
     = & j^{d-1} \prod_{i=1}^{d-1} r_{1,i} ^{(1)} + \sum_{v = d} ^ {(d-1)t} j ^ v 
    {\bigg( \sum_{\substack{u_1,\ldots,u_{d - 1}\in [t]\\u_1 + \cdots + u_{d - 1} = v}} \prod_{i = 1}^{d - 1} r_{u_i,i} ^{(1)} \bigg) }\\
     & + \sum_{w = d}^{dt} j ^ w {\bigg( \sum_{\substack{u_1, \ldots, u_d\in [t] \\ u_1 + \cdots + u_d = w}} \prod_{i = 1}^d r_{u_i, i} ^{(1)}  \bigg) }.
     \\
     = & j^{d - 1} 
          \prod_{i = 1}^{d - 1} r_{1,i} ^{(1)} 
          +  
     \sum_{v = d} ^ {(d - 1) t}  j ^ v 
          \sum_{\substack{u_1, \ldots, u_{d - 1} \in [t],u_d\in [0,t] \\ u_1 + \cdots + u_{d} = v}} \prod_{i = 1}^d r_{u_i, i} ^{(1)}  
          \\
      & + \sum_{w = (d - 1) t + 1}^{dt} j ^ w {
            \bigg(\sum_{\substack{u_1,\ldots,u_d\in [t]\\u_1 + \cdots + u_d = w}} \prod_{i=1}^d r_{u_i,i} ^{(1)}\bigg)},
      \end{align*}
where $r_{0,i}^{(1)}=1$ for all $i\in [d]$.

Note that the $(dt - d + 2) \times (dt - d + 2)$ square matrix
    \begin{equation*}
       A=\left[
        \begin{array}{cccc}
           1  & 2^{d-1} & \cdots & (dt - d + 2)^{d-1} \\
           1  & 2^{d} & \cdots & (dt - d + 2)^d \\
            \vdots & \vdots & \ddots & \vdots\\
           1 & 2^{dt} & \cdots & (dt - d + 2)^{dt}
        \end{array}
        \right]
    \end{equation*}
    has full rank, 
    so there exists a vector $\mathbf a=(a_1,\ldots,a_{dt - d + 2})^\top$ such that 
    $$
    A \mathbf a=(1,0,\cdots,0)^\top\in \mathbb F_p^{dt - d + 2}.
    $$
    Due to this observation, the adversary $\mathcal A$ can perform a linear combination $\sum_{j=1}^{dt - d + 2} a_j y^{(b)}_j$ of the first $(dt - d + 2)$ servers' output shares $\{y^{(b)}_j\}_{j=1}^{dt - d + 2}$ using vector $\mathbf a$ such that 
    $$
    \sum_{j=1}^{dt - d + 2} a_j y^{(0)}_j=0, \sum_{j=1}^{dt - d + 2} a_j y^{(1)}_j=\prod_{i = 1}^{d - 1} r_{1,i} ^{(1)}. 
    $$
    The adversary $\mathcal A$  may guess
    
    \begin{equation*}
        b'=
        \begin{cases}
            0 & \text{if }  \sum_{j=1}^{dt - d + 2} a_j y^{(b)}_j = 0, \\
            1 & \text{otherwise}.
        \end{cases}
    \end{equation*}

    We shall analyze $\mathcal A$'s advantage. First, it is easy to see that 
    $
        \Pr[b' = 0 | b = 0 ] = 1.
    $
    Since $r_{1,i}^{(1)}$ is chosen uniformly from $\mathbb F_p$ for each $i\in [d-1]$, 
    $$
        \Pr[b' = 1 | b = 1 ] = \Pr\left[\prod_{i = 1}^{d - 1} r_{1,i} ^{(1)} \ne 0\right] = \frac{(p-1) ^ {d-1}}{p ^ {d - 1}}.
    $$
    Then 
    the advantage of $\mathcal A$ is computed as
    \begin{align*}
    & \mathbf{Adv}_{\mathcal A,\Pi}^{\mathsf{Ctx}\text{-}\mathsf{Hid}}(f) \\
         = &\big|\Pr[\mathbf{Exp}_{\mathcal A,\Pi}^{\sf Ctx\text{-}Hid}(f)=1]-1/2\big| \\
        = &\big| 1/2\cdot\Pr[b' = 0 | b = 0 ] + 1/2\cdot \Pr[b' = 1 | b = 1 ] -1/2 \big|\\
        = & \frac{(p-1) ^ {d-1}}{2 p ^ {d - 1}},
    \end{align*}
 is not negligible in $\lambda$. Thus, $\Pi$ is not context-hiding for $f(\mathbf x) = \prod_{i=1}^d x_i$ with $d\ge 2$ and ${\cal D}_f=\mathbb F_p^d$. 
\end{proof}

\subsection{For General Monomial $x_1^{d_1} x_2^{d_2}\cdots x_n^{d_n}$}
For general monomial $f(\mathbf x) = \prod_{i=1}^n x_i^{d_i}$ with $d_i\ge 1$ for each $i \in [n]$, we first show that when $n=1$ and $d\ge 1$, Shamir-based IT-HSS is perfectly context-hiding for $f(x) = x^d$ with domain ${\cal D}_f=\mathbb F_p$. 
\begin{theorem} \label{thm_m12}
    The $m$-server $t$-private Shamir-based IT-HSS scheme $\Pi$ is perfectly context-hiding for $f(x) = x^d$ with $d\ge 1$ and ${\cal D}_f=\mathbb F_p$. 
\end{theorem}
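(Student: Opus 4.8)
The plan is to exploit the multiplicative symmetry that governs the fibers of $f(x)=x^d$. First I would observe that the adversary's two challenge inputs $x^{(0)}\ne x^{(1)}$ must both be nonzero: since $x^d=0$ holds only for $x=0$, if either input were $0$ then the common output $f(x^{(0)})=f(x^{(1)})=0$ would force both inputs to equal $0$, contradicting $x^{(0)}\ne x^{(1)}$. Hence $(x^{(0)})^d=(x^{(1)})^d$ with both values nonzero, and setting $\zeta=x^{(1)}/x^{(0)}\in\mathbb F_p^*$ gives $\zeta^d=1$, i.e. $\zeta$ is a $d$-th root of unity.

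Next I would write the sharing step explicitly. Recalling the scheme, $\mathsf{Share}(x^{(b)})$ samples $\varphi^{(b)}(u)=x^{(b)}+\sum_{v=1}^t r_v u^v$ with $(r_1,\dots,r_t)\gets\mathbb F_p^t$, and server $\mathcal S_j$ outputs $y_j^{(b)}=\varphi^{(b)}(j)^d$. The heart of the argument is the bijection $\sigma\colon(r_1,\dots,r_t)\mapsto(\zeta r_1,\dots,\zeta r_t)$ on $\mathbb F_p^t$. Because $\zeta\ne0$, $\sigma$ is a bijection and therefore maps the uniform distribution on $\mathbb F_p^t$ to itself. Applying $\sigma$ to the randomness used in the $b=0$ run turns $\varphi^{(0)}$ into $\zeta\varphi^{(0)}$, whose constant term is $\zeta x^{(0)}=x^{(1)}$; thus $\zeta\varphi^{(0)}$ is a uniformly random polynomial of degree $\le t$ with constant term $x^{(1)}$, so the joint vector $(\zeta\varphi^{(0)}(j))_{j=1}^m$ is distributed exactly as $(\varphi^{(1)}(j))_{j=1}^m$.

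Finally I would push this equality through the $d$-th power coordinatewise and use $\zeta^d=1$. For every $j$ we have $(\zeta\varphi^{(0)}(j))^d=\zeta^d\varphi^{(0)}(j)^d=\varphi^{(0)}(j)^d$, so the full output-share vector $(y_j^{(1)})_{j=1}^m$ is identically distributed to $(y_j^{(0)})_{j=1}^m$. Since the two output-share distributions coincide, the adversary's view is independent of $b$ and $\mathbf{Adv}_{\mathcal A,\Pi}^{\mathsf{Ctx}\text{-}\mathsf{Hid}}(f)=0$, giving perfect context hiding.

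I expect the only delicate points, rather than a genuine obstacle, to be (i) ruling out the degenerate zero input so that $\zeta$ is well defined, and (ii) checking that $\sigma$ preserves the \emph{joint} distribution of all $m$ shares rather than merely each marginal, which is precisely what lets the single root of unity $\zeta$ cancel simultaneously across every server. Once the scaling bijection is identified, the remaining computation is immediate.
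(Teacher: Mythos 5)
Your proof is correct and uses essentially the same argument as the paper: the paper's core step (its Case 2) is exactly your scaling bijection, pairing the randomness via $r_u \mapsto c\, r_u$ for the $d$-th root of unity $c = x^{(1)}/x^{(0)}$ and cancelling $c^d = 1$ inside the $d$-th power, which makes the two output-share vectors identically distributed. The only difference is organizational: the paper first splits on $\ell = \gcd(d, p-1)$ (handling the bijective case $\ell = 1$ and the output $y=0$ via its Remark 1), whereas you dispose of these degenerate cases more directly by observing that any valid challenge pair must consist of nonzero field elements.
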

\begin{proof} 
Let $\ell=\mathsf{gcd}(d,p-1)$. We consider two cases. 
 
 \noindent \underline{\textsc{Case 1}: $\ell=1$}. In this case, the function $f:\mathbb{F}_p\rightarrow 
\mathbb{F}_p$ is bijective.  By   {\em Remark 1}, 
 the Shamir-based IT-HSS scheme is  perfectly context-hiding
  for $f$.

\noindent \underline{\textsc{Case 2}: $\ell>1$}.
As $y=0$ has a unique image under $f$, by {\em Remark 1} it suffices to consider 
 the function outputs   $y\in {\cal R}_f \setminus \{0\}$.     
Note that   such a $y$ has exactly
$\ell$ different preimages and for any two different preimages $x^{(0)}$ and $x^{(1)}$, there is a constant  $c\in\mathbb F_p^*$ such that $x^{(1)} = c x^{(0)}$. Then
    $$
    (x^{(0)})^d =y= (x^{(1)})^d = (c x^{(0)})^d =c^d (x^{(0)})^d.
    $$
It follows that  $c^d = 1$.
    Consider the experiment $\mathbf{Exp}_{\mathcal A,\Pi}^{\mathsf{Ctx\text{-}Hid}}(f)$  where 
$\cal A$ declares  $x^{(0)}$ and $x^{(1)}$. 
    If the challenger chooses $b\in\{0,1\}$, then for each $j \in [m]$  the output share of
$\mathcal S_j$ is
$$
y_j^{(b)} =  \bigg(x^{(b)} + \sum_{u=1}^t j^u r_{u}^{(b)}\bigg)^d,
$$
where every $r_{u}^{(b)}$ is uniformly chosen  from $\mathbb F_p$.
Below we show that $(y_j^{(0)})_{j=1}^m$ and $(y_j^{(1)})_{j=1}^m$ are identically distributed.
The proof strategy is  finding a bijection
$
\tau((r_{u}^{(0)})_{u \in [t]})=(r_{u}^{(1)})_{u \in [t]}
$
such that replacing $(r_{u}^{(1)})_{u \in [t]}$ by $\tau((r_{u}^{(0)})_{u \in [t]})$ in the expression of $(y_j^{(1)})_{j=1}^m$ gives $(y_j^{(0)})_{j=1}^m$. 
Let $\tau:\mathbb{F}_p^t\rightarrow \mathbb{F}_p^t$ be a bijection defined by 
$\tau(\lambda_1,\ldots,\lambda_t)=c\cdot (\lambda_1,\ldots,\lambda_t) $.
After setting $(r_{u}^{(1)})_{u \in [t]}=\tau((r_{u}^{(0)})_{u \in [t]})$,  we have 
\begin{align*}
\bigg(x^{(1)} + \sum_{u=1}^t j^u r_{u}^{(1)}\bigg)^d  = & \bigg(c x^{(0)} + \sum_{u=1}^t j^u (cr_{u}^{(0)} ) \bigg)^d   
    \\
    = &c^d \bigg( x^{(0)} + \sum_{u=1}^t j^u r_{u}^{(0)}  \bigg)^d   
    \\
    = & \bigg(x^{(0)} + \sum_{u=1}^t j^u r_{u}^{(0)}\bigg)^d.
\end{align*}
Thus, we have that $(y_j^{(1)})_{j=1}^m=(y_j^{(0)})_{j=1}^m$ after pairing the randomness that underlies the computations of the two output share vectors. 
Hence, the two output share vectors are identically distributed. Thus, 
$
 \mathbf{Adv}_{\mathcal A,\Pi}^{\sf Ctx\text{-}Hid}(f)
    =0
$.
\end{proof}
 
While the Shamir-based IT-HSS scheme may not be perfectly context-hiding for a general  
monomial function $f(\mathbf x)=\prod_{i=1}^n x_i^{d_i}~(n\geq 2, d_1,\ldots,d_n\geq 1)$ with domain $\mathbb{F}_p^n$, the following theorem shows that the scheme becomes perfectly context-hiding if we 
restrict $f$ to a smaller domain $(\mathbb F_p^*)^n$.  
\begin{theorem} \label{thm_ms22}
    The $m$-server $t$-private Shamir-based IT-HSS scheme $\Pi$ is perfectly context-hiding for $f(\mathbf x)=\prod_{i=1}^n x_i^{d_i}$, where   $d_i \ge 1$ for every $i\in[n]$ and ${\cal D}_f=(\mathbb F_p^*)^n$.
\end{theorem}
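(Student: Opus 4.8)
The plan is to mirror the argument for Theorem~\ref{thm_m12}, replacing the single scaling factor $c$ by a vector of per-coordinate factors. By \emph{Remark 1} it suffices to consider an output value $y\in{\cal R}_f$ with at least two preimages; since ${\cal D}_f=(\mathbb F_p^*)^n$, every input has all coordinates nonzero, so $y=f(\mathbf x^{(0)})=\prod_i (x_i^{(0)})^{d_i}$ is itself nonzero. Given two distinct preimages $\mathbf x^{(0)},\mathbf x^{(1)}$, I would first record that each coordinate ratio $c_i:=x_i^{(1)}/x_i^{(0)}$ is well-defined and lies in $\mathbb F_p^*$, precisely because every $x_i^{(b)}$ is invertible. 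Substituting $x_i^{(1)}=c_i x_i^{(0)}$ into $f(\mathbf x^{(1)})=f(\mathbf x^{(0)})$ and cancelling the common nonzero factor $\prod_i (x_i^{(0)})^{d_i}$ yields the key identity $\prod_{i=1}^n c_i^{d_i}=1$.

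Next I would write the output share of server $\mathcal S_j$ explicitly. Since $\mathsf{Share}$ masks coordinate $i$ with its own independent randomness $r_{1,i},\ldots,r_{t,i}$, the share is $y_j^{(b)}=\prod_{i=1}^n \bigl(x_i^{(b)}+\sum_{u=1}^t j^u r_{u,i}^{(b)}\bigr)^{d_i}$, where all $r_{u,i}^{(b)}$ are uniform over $\mathbb F_p$. The proof strategy is again to exhibit a bijection on the randomness that carries the $b=1$ distribution onto the $b=0$ distribution. Define $\tau:(\mathbb F_p^t)^n\to(\mathbb F_p^t)^n$ by scaling the randomness of coordinate $i$ by $c_i$, i.e.\ $r_{u,i}^{(1)}:=c_i r_{u,i}^{(0)}$ for all $u\in[t]$ and $i\in[n]$. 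This is a bijection because each $c_i\in\mathbb F_p^*$ is invertible, so $\tau$ preserves the uniform distribution on the randomness.

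I would then substitute $\tau((r_{u,i}^{(0)}))$ into the expression for $y_j^{(1)}$. Factoring $c_i^{d_i}$ out of the $i$-th bracket gives $y_j^{(1)}=\bigl(\prod_{i=1}^n c_i^{d_i}\bigr)\cdot\prod_{i=1}^n \bigl(x_i^{(0)}+\sum_{u=1}^t j^u r_{u,i}^{(0)}\bigr)^{d_i}$, and the identity $\prod_i c_i^{d_i}=1$ collapses this to $y_j^{(0)}$, simultaneously for every $j\in[m]$. Hence the two output-share vectors coincide under the pairing induced by $\tau$, so they are identically distributed and $\mathbf{Adv}_{\mathcal A,\Pi}^{\sf Ctx\text{-}Hid}(f)=0$.

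As for obstacles, the argument is essentially routine once the reduction to the nonzero-coordinate case is in place; the one point that genuinely requires the restricted domain $(\mathbb F_p^*)^n$ is the existence of the invertible ratios $c_i$ --- this is exactly what fails over $\mathbb F_p^n$ (cf.\ Theorem~\ref{thm_m22}), where a coordinate can flip between zero and nonzero and no scaling bijection exists. The only thing to double-check is that the masking randomness is indeed independent across coordinates, so that a separate $c_i$ may be applied to each; this follows from $\varphi\in\mathbb F_p^n[u]$ having independent uniform vector coefficients.
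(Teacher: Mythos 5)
Your proposal is correct and follows essentially the same route as the paper's own proof: extract per-coordinate ratios $c_i\in\mathbb F_p^*$, derive $\prod_{i=1}^n c_i^{d_i}=1$, and pair the randomness via $r_{u,i}^{(1)}=c_i r_{u,i}^{(0)}$ to show the output-share vectors are identically distributed. Your added remarks on the bijectivity of the scaling map and the coordinate-wise independence of the masking randomness make explicit what the paper leaves implicit, but the argument is the same.
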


\begin{proof}
Let  $\mathbf x^{(0)}=(x_i^{(0)})_{i=1}^n,\mathbf x^{(1)} = (x_i^{(1)})_{i=1}^n\in (\mathbb F_p^*)^n$
be any two different input vectors such that $f(\mathbf x^{(0)}) = f(\mathbf x^{(1)}) $. 
For every $i\in[d]$, let  $c_i\in \mathbb F_p^*$ be a scalar such that 
 $x_i^{(1)}=c_i x_i^{(0)}$. Then 
    \begin{align*}
           \prod_{i=1}^n (x_i^{(0)})^{d_i} = & \prod_{i=1}^n (x_i^{(1)})^{d_i} \\
           = & \prod_{i=1}^n (c_i x_i^{(0)})^{d_i}  
           =  \bigg( \prod_{i = 1}^n c_i^{d_i} \bigg) \cdot \prod_{i = 1}^n (x_i^{(0)})^{d_i}.
    \end{align*}
Thus, 
        $ \prod_{i = 1}^n c_i^{d_i}  = 1 $.
          Consider the experiment $\mathbf{Exp}_{\mathcal A,\Pi}^{\mathsf{Ctx\text{-}Hid}}(f)$  where 
$\cal A$ declares  ${\bf x}^{(0)}$ and ${\bf x}^{(1)}$. 
     If the challenger chooses $b\in\{0,1\}$, the output share from
$\mathcal S_j$ is 
$$
y_j^{(b)}
    =  \prod_{i=1}^n \bigg(x_i^{(b)} + \sum_{u=1}^t j^u r_{u,i}^{(b)}\bigg)^{d_i}$$
     for all $j \in [m]$,     
where each $r_{u,i}^{(b)}$ is uniformly chosen from $\mathbb F_p$.
Suppose that 
$
    r_{u,i}^{(1)}=c_i r_{u,i}^{(0)}
$
for all $i\in[n], u\in [t]$. Then  
\begin{align*}
 y_j^{(1)}  =  & \prod_{i=1}^n \bigg(x_i^{(1)} + \sum_{u=1}^t j^u r_{u,i}^{(1)}\bigg)^{d_i} \\
      = &\prod_{i=1}^n \bigg(c_i x_i^{(0)} + \sum_{u=1}^t j^u c_i r_{u,i}^{(0)}\bigg)^{d_i}
    \\
       =&\bigg(\prod_{i=1}^n c_i^{d_i} \bigg)\cdot  \prod_{i=1}^n \bigg(x_i^{(0)} + \sum_{u=1}^t j^u r_{u,i}^{(0)}\bigg)^{d_i}\\
    = & \prod_{i=1}^n \bigg(x_i^{(0)} + \sum_{u=1}^t j^u r_{u,i}^{(0)}\bigg)^{d_i}
    =y_j^{(0)}.
\end{align*}
Therefore, we have that $(y_j^{(1)})_{j=1}^m=(y_j^{(0)})_{j=1}^m$ after pairing the randomness
that underlies the computations of the two output share vectors. 
Hence, the two output share vectors are identically distributed. 
Thus, 
$
 \mathbf{Adv}_{\mathcal A,\Pi}^{\sf Ctx\text{-}Hid}(f)
    =0
$.
\end{proof}

\section{Context-Hiding Property of Shamir-Based IT-HSS for Polynomials}
In this section, we extend the   results  for monomials (Section \ref{sec_mon}) to 
a wide range of equivalent polynomials.

\begin{definition}[\bf Equivalence of Polynomials]
Let $f,g\in \mathbb F_p[{\bf x}]$ be $n$-variate degree-$d$ polynomials with domain
${\cal D}_{f},{\cal D}_g\subseteq \mathbb{F}_p^n$ respectively.
We say that $f$ and $g$ are 
     \emph{equivalent} and denote  $(f,{\cal D}_f)\equiv (g,{\cal D}_g)$ if there is an 
     {\em equivalence  transformation} $S=(\alpha,\beta, \gamma, 
     L,{\bf c}, {\bf e})\in \mathbb F_p^*\times \mathbb F_p\times \mathbb F_p \times \mathbb{GL}_n(\mathbb F_p) \times 
     \mathbb F_p^n\times \mathbb F_p^n$ that satisfies the following properties: 
     \begin{align}
  {\cal D}_f&= \{ ({\bf x}+{\bf c}) L + {\bf e}: {\bf x}\in {\cal D}_g \} \label{eqn:ep1}, \\ 
g(\mathbf x)&=\alpha \Big( f\big(({\bf x}+{\bf c}) L + {\bf e}\big) + \beta \Big)+\gamma, \forall {\bf x}\in {\cal D}_g. \label{eqn:ep2}
     \end{align}
\end{definition}

\begin{lemma} \label{prop_eq}
The binary relation ``$\equiv$'' between polynomials is an equivalence relation, i.e., it satisfies the following properties:  
\begin{itemize}
\item[1)] {\em reflexive}: $(f,{\cal D}_f)\equiv (f,{\cal D}_f)$ for any $n$-variate polynomial function $f\in \mathbb{F}_{p}[{\bf x}]$ with domain 
${\cal D}_f\subseteq \mathbb{F}_p^n$;  
\item[2)] {\em symmetric}: If $(f,{\cal D}_f)\equiv (g,{\cal D}_g)$, then $(g,{\cal D}_g)\equiv (f,{\cal D}_f)$;  
\item[3)] {\em transitive}: If $(f,{\cal D}_f)\equiv (g,{\cal D}_g)$ and $(g,{\cal D}_g)\equiv (h,{\cal D}_h)$,
then $(f,{\cal D}_f)\equiv (h,{\cal D}_h)$.
\end{itemize}
\end{lemma}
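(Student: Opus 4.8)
The plan is to verify the three defining properties directly, by exhibiting an explicit equivalence transformation witnessing each one. Throughout I abbreviate the affine map attached to a transformation $S=(\alpha,\beta,\gamma,L,{\bf c},{\bf e})$ by $\phi_S({\bf x})=({\bf x}+{\bf c})L+{\bf e}$, and I use that $L\in\mathbb{GL}_n(\mathbb F_p)$ acting on row vectors from the right makes $\phi_S$ a bijection on $\mathbb F_p^n$ sending ${\cal D}_g$ onto ${\cal D}_f$. For \emph{reflexivity} I would take the identity transformation $S=(1,0,0,I_n,{\bf 0},{\bf 0})$, for which $\phi_S({\bf x})={\bf x}$, so that (\ref{eqn:ep1}) and (\ref{eqn:ep2}) collapse to ${\cal D}_f={\cal D}_f$ and $f({\bf x})=f({\bf x})$.

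For \emph{symmetry}, given a witness $S=(\alpha,\beta,\gamma,L,{\bf c},{\bf e})$ for $(f,{\cal D}_f)\equiv(g,{\cal D}_g)$, I would invert both halves of the transformation. Solving ${\bf y}=({\bf x}+{\bf c})L+{\bf e}$ for ${\bf x}$ gives ${\bf x}=({\bf y}-{\bf e})L^{-1}-{\bf c}$, which is exactly $\phi_{S'}({\bf y})$ for $L'=L^{-1}$, ${\bf c}'=-{\bf e}$, ${\bf e}'=-{\bf c}$. Rewriting (\ref{eqn:ep2}) as $f(\phi_S({\bf x}))=\alpha^{-1}(g({\bf x})-\gamma)-\beta$ and substituting ${\bf x}=\phi_{S'}({\bf y})$ expresses $f$ through $g$; matching it against the canonical form yields the witness $S'=(\alpha^{-1},-\gamma,-\beta,L^{-1},-{\bf e},-{\bf c})$.

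For \emph{transitivity}, given witnesses $S_1$ and $S_2$ for $(f,{\cal D}_f)\equiv(g,{\cal D}_g)$ and $(g,{\cal D}_g)\equiv(h,{\cal D}_h)$, I would compose. Since ${\cal D}_f$ is reached from ${\cal D}_h$ by first applying $\phi_{S_2}$ and then $\phi_{S_1}$, expanding $\phi_{S_1}(\phi_{S_2}({\bf x}))=({\bf x}+{\bf c}_2)L_2L_1+({\bf e}_2+{\bf c}_1)L_1+{\bf e}_1$ shows the composite is again affine, with linear part $L_2L_1$, translation ${\bf c}_3={\bf c}_2$ and ${\bf e}_3=({\bf e}_2+{\bf c}_1)L_1+{\bf e}_1$. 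Substituting the $S_1$-relation for $g$ into the $S_2$-relation for $h$ then chains the value maps and, after collecting terms, gives multiplier $\alpha_3=\alpha_1\alpha_2\in\mathbb F_p^*$ together with $\beta_3=\beta_1$ and $\gamma_3=\alpha_2(\gamma_1+\beta_2)+\gamma_2$.

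The only real subtlety, and where I would be most careful, is the bookkeeping of the two affine actions at once: I must respect the order of the matrix product (it is $L_2L_1$ rather than $L_1L_2$, because row vectors multiply on the right and $\phi_{S_2}$ is applied first) and correctly push the translations ${\bf c},{\bf e}$ through $L$ and the scalar $\alpha$ through the additive constants $\beta,\gamma$. Once each constructed transformation is written in the canonical tuple form, checking (\ref{eqn:ep1}) and (\ref{eqn:ep2}) is a routine substitution, so no genuine difficulty remains beyond this tracking of indices and orders.
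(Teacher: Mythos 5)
Your proposal is correct and follows essentially the same route as the paper: you exhibit exactly the same witness transformations, namely $(1,0,0,I_n,{\bf 0},{\bf 0})$ for reflexivity, $(\alpha^{-1},-\gamma,-\beta,L^{-1},-{\bf e},-{\bf c})$ for symmetry, and for transitivity the composite with linear part $L_2L_1$, ${\bf c}_3={\bf c}_2$, ${\bf e}_3=({\bf e}_2+{\bf c}_1)L_1+{\bf e}_1$, and your $\gamma_3=\alpha_2(\gamma_1+\beta_2)+\gamma_2$ expands to the paper's $\alpha_2\gamma_1+\alpha_2\beta_2+\gamma_2$. Your attention to the order $L_2L_1$ under right action on row vectors is precisely the point the paper's construction also encodes.
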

\begin{proof}
1) reflexive: 
$S=(1,0, 0, 
     I,{\bf 0}, {\bf 0})$ is an equivalence transformation from
$(f, {\cal D}_f)$ to itself, where $I$ is the $n\times n$ identity matrix.
2) symmetric: If $S=(\alpha,\beta,\gamma,
     L, {\bf c},  {\bf e})$ is an equivalence transformation from
 $(f, {\cal D}_f)$  to $(g, {\cal D}_g)$, then   
$S'=(\alpha^{-1},-\gamma,-\beta, 
     L^{-1},-{\bf e}, -{\bf c})$ 
     is an equivalence transformation from $(g, {\cal D}_g)$  to $(f, {\cal D}_f)$.
     3) transitive: If   $S_1=(\alpha_1,\beta_1,\gamma_1,
     L_1, {\bf c}_1,  {\bf e}_1)$ is an equivalence transformation from
 $(f, {\cal D}_f)$  to $(g, {\cal D}_g)$ and 
 $S_2=(\alpha_2,\beta_2,\gamma_2,
     L_2, {\bf c}_2,  {\bf e}_2)$ is an equivalence transformation from
 $(g, {\cal D}_g)$  to $(h, {\cal D}_h)$, then 
 $S_3=(\alpha_2\alpha_1,\beta_1, \alpha_2\gamma_1+\alpha_2\beta_2+\gamma_2,L_2L_1, {\bf c}_2, ({\bf e}_2+{\bf c}_1)L_1+{\bf e}_1)$ is an equivalence transformation from 
 $(f, {\cal D}_f)$  to $(h, {\cal D}_h)$.  
\end{proof}

The following theorem shows that the Shamir-based IT-HSS scheme always exhibits the same 
context-hiding property for any two equivalent polynomials.
\begin{theorem} \label{thm_equiv}
Suppose that  $(f,{\cal D}_f)\equiv (g,{\cal D}_g)$ is an equivalence between two $n$-variate polynomials. The $m$-server $t$-private Shamir-based IT-HSS scheme $\Pi$ is (perfectly) context-hiding for $f$ if and only if it is (perfectly) context-hiding for $g$.
\end{theorem}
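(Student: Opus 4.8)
The plan is to establish both directions of the ``if and only if'' through a single advantage-preserving reduction, invoking the symmetry of $\equiv$ (Lemma~\ref{prop_eq}) to avoid duplicating the argument. Fix an equivalence transformation $S=(\alpha,\beta,\gamma,L,{\bf c},{\bf e})$ witnessing $(f,{\cal D}_f)\equiv(g,{\cal D}_g)$, and write $\phi({\bf x})=({\bf x}+{\bf c})L+{\bf e}$ for the associated affine map. Since $L\in\mathbb{GL}_n(\mathbb F_p)$, the map $\phi$ is a bijection on $\mathbb F_p^n$, and by \eqref{eqn:ep1} it restricts to a bijection ${\cal D}_g\to{\cal D}_f$. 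It then suffices to show that every adversary ${\cal A}_g$ against $g$ yields an adversary ${\cal A}_f$ against $f$ of equal advantage; applying this with the roles of $f$ and $g$ interchanged (via the inverse transformation supplied by Lemma~\ref{prop_eq}) delivers the converse, and the perfect case follows verbatim by reading ``advantage $0$'' for ``negligible advantage''.

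First I would construct ${\cal A}_f$ from ${\cal A}_g$. When ${\cal A}_g$ outputs a challenge pair ${\bf x}_g^{(0)},{\bf x}_g^{(1)}\in{\cal D}_g$ with $g({\bf x}_g^{(0)})=g({\bf x}_g^{(1)})$, I would have ${\cal A}_f$ set ${\bf x}_f^{(b)}=\phi({\bf x}_g^{(b)})$ for $b\in\{0,1\}$. These lie in ${\cal D}_f$ by \eqref{eqn:ep1}, are distinct because $\phi$ is injective, and satisfy $f({\bf x}_f^{(0)})=f({\bf x}_f^{(1)})$: indeed \eqref{eqn:ep2} gives $g({\bf x}_g^{(b)})=\alpha\big(f({\bf x}_f^{(b)})+\beta\big)+\gamma$, and since $\alpha\neq 0$ the equality of the two $g$-values forces equality of the two $f$-values, so ${\cal A}_f$'s challenge is admissible. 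Upon receiving the $f$-output shares $(y_j)_{j=1}^m$, ${\cal A}_f$ forms $\tilde y_j=\alpha(y_j+\beta)+\gamma$ for each $j$, hands $(\tilde y_j)_{j=1}^m$ to ${\cal A}_g$, and echoes ${\cal A}_g$'s guess.

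The heart of the argument is to show that $(\tilde y_j)_{j=1}^m$ is distributed exactly as a genuine tuple of $g$-output shares for ${\bf x}_g^{(b)}$, so that ${\cal A}_g$ is placed inside a faithful copy of its own experiment. Here I would exploit two bijections. For the randomness, if $\psi(u)={\bf x}_g^{(b)}+\sum_{v=1}^t u^v {\bf r}_v$ is a Shamir sharing polynomial for ${\bf x}_g^{(b)}$, then $\phi\circ\psi$ is the polynomial $u\mapsto {\bf x}_f^{(b)}+\sum_{v=1}^t u^v ({\bf r}_v L)$, a valid Shamir sharing polynomial for ${\bf x}_f^{(b)}$; because right-multiplication by $L$ is a bijection of $\mathbb F_p^n$, the coefficients ${\bf r}_v L$ are uniform and independent whenever the ${\bf r}_v$ are, so this sets up a measure-preserving correspondence between the two sharings. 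For the shares, under this correspondence the server-$j$ value is $y_j=f\big(\phi(\psi(j))\big)$, whence $\tilde y_j=\alpha\big(f(\phi(\psi(j)))+\beta\big)+\gamma=g(\psi(j))$, exactly the genuine $g$-output share. Consequently the simulated shares coincide with the real $g$-shares, the challenge bit is preserved, and $\mathbf{Adv}_{{\cal A}_g,\Pi}^{\sf Ctx\text{-}Hid}(g)=\mathbf{Adv}_{{\cal A}_f,\Pi}^{\sf Ctx\text{-}Hid}(f)$, which closes the reduction.

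The step I expect to be the main obstacle is the share identity $\tilde y_j=g(\psi(j))$, which applies \eqref{eqn:ep2} at the point $\psi(j)$: this is a random share and need not lie in ${\cal D}_g$. The relation $g({\bf z})=\alpha\big(f(\phi({\bf z}))+\beta\big)+\gamma$ must therefore be available at arbitrary ${\bf z}\in\mathbb F_p^n$, i.e.\ as an identity of polynomials, not merely pointwise on ${\cal D}_g$. I would resolve this by observing that both sides are degree-$d$ polynomials obtained from $f$ by an affine substitution $\phi$ composed with an affine output map, so the equivalence transformation produces a genuine polynomial-level identity; the server computation of $g$ thus agrees with the transformed computation of $f$ at \emph{every} share, off-domain points included. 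Pinning down this polynomial identity, rather than the pointwise equality on the domain, is the one place where care beyond routine bookkeeping is needed.
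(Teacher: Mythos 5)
Your proposal is correct and takes essentially the same route as the paper: both exploit the symmetry of $\equiv$ (Lemma~\ref{prop_eq}) to argue one direction only, both pair the Shamir randomness through the bijection $\mathbf r\mapsto \mathbf r L$ (the paper writes it as $\mathbf r_u^{(b)}=\bar{\mathbf r}_u^{(b)}L^{-1}$), and both relate the two share vectors through the fixed affine map $y\mapsto\alpha(y+\beta)+\gamma$, your version merely packaging the paper's direct distributional argument as an explicit advantage-preserving reduction. Your closing caveat is well taken: the paper also applies Eq.~(\ref{eqn:ep2}) at share points $\mathbf x^{(b)}+\sum_{u=1}^t j^u\mathbf r_u^{(b)}$ that need not lie in ${\cal D}_g$, doing so silently, so your insistence that the relation be read as a polynomial-level identity (which the definition's pointwise statement on ${\cal D}_g$ only guarantees when ${\cal D}_g$ is an interpolating set for degree-$d$ polynomials, as in the paper's examples) is if anything more careful than the original.
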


\begin{proof} 
We focus on context hiding. The proof for perfect context hiding is similar. 
By 2) of \textbf{Lemma} \ref{prop_eq}, it suffices to show  ``only if''.
Suppose that $\Pi$ is context-hiding for $(f,{\cal D}_f)$. We  need to prove    $\mathbf{Adv}_{\mathcal A,\Pi}^{\sf Ctx\text{-}Hid}(g)\le \mathsf{negl}(\lambda)$ for any adversary $\mathcal A$. It suffices to show for any two different inputs $\mathbf x^{(0)},\mathbf x^{(1)}\in {\cal D}_g$ with $g(\mathbf x^{(0)})=g(\mathbf x^{(1)})$, 
    the  two output share distributions
\begin{align}\label{eqn:og}
    \Big(y_j^{(b)}\Big)_{j=1}^m  = \Big (g\big(\mathbf x^{(b)}+\sum_{u=1}^t j^u \mathbf r_u^{(b)}\big)\Big)_{j=1}^m,~~ b=0,1
\end{align}
    are   statistically indistinguishable, 
        where $\mathbf r_u^{(b)}$ are chosen uniformly 
        from $\mathbb F_p^n$  for all $b\in \{0,1\}$ and $u\in[t]$. 

    Since  $(f,{\cal D}_f)\equiv (g,{\cal D}_g)$,  there is an equivalence 
     transformation $S=(\alpha,\beta, \gamma, 
     L,{\bf c}, {\bf e})$ such that Eq. (\ref{eqn:ep1}) and   (\ref{eqn:ep2}) are satisfied. 
Let  $\bar{\mathbf x}^{(b)}=({\bf x}^{(b)}+{\bf c}) L + {\bf e}$ for  $b\in \{0,1\}$. Then  
 \begin{equation*}
 \begin{split}\label{eq_eqi}
     f(\bar{\mathbf x}^{(0)}) &=   f(({\bf x}^{(0)}+{\bf c}) L + {\bf e})\\ 
    &= \alpha^{-1} ( g(\mathbf x^{(0)}) - \gamma) -\beta \\
     &= \alpha^{-1} ( g(\mathbf x^{(1)}) - \gamma)-\beta \\
     &= f(({\bf x}^{(1)}+{\bf c}) L + {\bf e}) = f(\bar{\mathbf x}^{(1)}).
     \end{split}
 \end{equation*}
Since $\Pi$ is context-hiding  for $f$, the output shares 
$$
         \Big(\bar y_j^{(b)}\Big)_{j=1}^m =  
        \Big( f \big(\bar{\mathbf x}^{(b)}+\sum_{u=1}^t j^u \bar{\mathbf r}_u^{(b)} \big) \Big)_{j=1}^m, ~~b=0,1 
$$ 
must be statistically indistinguishable, 
where $\bar{\mathbf r}_u^{(b)}$ is uniformly  chosen  from $\mathbb F_p^n$  for all $b\in\{0,1\}$ and $u\in[t]$.
If  ${\mathbf r}_u^{(b)} = \bar{\mathbf r}_u^{(b)} L^{-1}$ for all $u\in [t]$, then  
    \begin{equation*}
    \begin{split} \label{eq_equiv}
      y_j^{(b)} = & g\big(\mathbf x^{(b)}+\sum_{u=1}^t j^u \mathbf r_u^{(b)}\big) \\
      =& \alpha \bigg( f\Big(
      \big(
      \mathbf x^{(b)}+\sum_{u=1}^t j^u \mathbf r_u^{(b)} + \mathbf c
      \big) L +{\bf e}  \Big) + \beta \bigg) +\gamma \\
        =& \alpha \Big( f\big(
      \bar{\mathbf x}^{(b)} +  \sum_{u=1}^t j^u \bar{\mathbf r}_u^{(b)} 
     \big) + \beta \Big) +\gamma \\
        = & \alpha (\bar y_j^{(b)} + \beta) + \gamma. 
    \end{split}  
    \end{equation*} 
    Thus,   for any ${\bf z} \in \mathbb F_p^m$, we have that 
    \begin{align*}
     \Pr\left[\big(y_j^{(b)}\big)_{j=1}^m ={\bf z} \right ]=   \Pr\Big[\big(\bar y_j^{(b)}\big)_{j=1}^m = \frac{{\bf z}-\gamma \cdot {\bf 1}}{\alpha}-\beta \cdot {\bf 1}\Big],
    \end{align*}
    where $\bf 1$ is the all-one vector of length $m$.
    This implies that the distributions in (\ref{eqn:og}) are also statistically indistinguishable.
        \end{proof}

According to \textbf{Theorem} \ref{thm_equiv}, we can easily extend the context-hiding property of the Shamir-based IT-HSS scheme $\Pi$ for monomials (Section \ref{sec_mon}) to polynomials equivalent to them. 
For example, by  \textbf{Theorem} \ref{thm_m22},
$\Pi$ is not context-hiding for $(f(\mathbf x),{\cal D}_f)=(\prod_{i=1}^d x_i,\mathbb F_p^d)$. 
For every polynomial $(g(\mathbf x),{\cal D}_g)=(\alpha(\prod_{i=1}^d (({\bf x}+{\bf c}){\bf l}_i^\top + e_i)+\beta)+\gamma, \mathbb F_p^d)$ with $(\alpha,\beta, \gamma, 
     ({\bf l}_i^\top)_{i=1}^d,{\bf c}, (e_i)_{i=1}^d) 
     \in 
     \mathbb F_p^*\times \mathbb F_p\times \mathbb F_p \times \mathbb{GL}_d(\mathbb F_p) \times 
     \mathbb F_p^d\times \mathbb F_p^d$, it holds that $(f(\mathbf x),{\cal D}_f)\equiv (g(\mathbf x),{\cal D}_g)$,  since there exists an equivalence transformation $S=(\alpha,\beta, \gamma, 
     ({\bf l}_i^\top)_{i=1}^d,{\bf c}, (e_i)_{i=1}^d)$ such that Eq. (\ref{eqn:ep1}) and   (\ref{eqn:ep2}) are satisfied.
By applying \textbf{Theorem} \ref{thm_equiv}, it is obtained that $\Pi$ is not context-hiding for $(g(\mathbf x),{\cal D}_g)$.

\section{Conclusion}
In this paper, we present a novel formalization of the context-hiding property for individual functions in IT-HSS and analyze this property of the Shamir-based IT-HSS scheme for monomials. By defining the equivalence of polynomials, we extend the results for monomials to equivalent polynomials. 
While this extension captures a wide range of equivalent polynomials, 
it remains an interesting problem to study for those polynomials outside this range.

\end{document}